\documentclass[conference]{IEEEtran}
\usepackage{graphicx}  
\usepackage{amsmath, amsthm, amsfonts}
\usepackage{float}
\usepackage[linesnumbered,ruled,vlined]{algorithm2e}
\usepackage{xcolor}
\usepackage{tikz}
\usepackage{pifont}

\newtheorem{theorem}{Theorem}
\newtheorem{lemma}{Lemma}
\theoremstyle{definition}

\newtheorem{remark}{Remark}

\usepackage{wrapfig}
\usepackage{mathtools, cuted}
\usepackage{amstext}
\usepackage{amssymb}
\usepackage{mathrsfs}
\usepackage{mathtools}
\usepackage{tikz}
\usepackage{marginnote}
\usepackage{tkz-tab}
\usetikzlibrary{topaths,calc}

\allowdisplaybreaks

\newcommand{\remove}[1]{}

\newcommand{\cP}{\mathcal{P}}

\begin{document}
\title{Constrained Maximum Entropy Contiguous Aggregations}

\author{\IEEEauthorblockN{Roberto Bruno \qquad Ugo Vaccaro} \IEEEauthorblockA{Department of Computer Science, University {of Salerno}, Fisciano (SA), Italy\\
Emails: \tt \{rbruno, uvaccaro\}@unisa.it}}

\maketitle

\begin{abstract}
Given a probability distribution $p = (p_1, \dots, p_n)$ and an integer $1\leq m \leq n$, 
a contiguous aggregation of $p$ is a probability distribution
$q = (q_1, \dots, q_m)$ 
 such that each $q_i$ is a sum of consecutive elements of $p$. Given $p$ and a 
 positive number $R$,
we consider the problem of computing
a maximum entropy contiguous aggregation $q$ of $p$, under the constraint that 
its Shannon entropy $H(q)$ is at most $R$.
We devise a dynamic programming algorithm that solves the problem exactly,
and two time-efficient greedy algorithms that provide close-to-optimal
solutions. We discuss a few scenarios where our problem arises.
\end{abstract}

\textit{Index Terms}---  Entropy maximization, approximation algorithms, alphabet partitioning, quantization
\section{Introduction}
Let 
    $\cP_n = \{p=(p_1,\dots,p_n): p_i\geq 0,  \sum_{i=1}^n p_i = 1\}$
be the $(n-1)$-dimensional probability simplex.
For any $p\in \cP_n$ we denote by $H(p)=-\sum_{i=1}^n p_i\log_2 p_i$ the Shannon entropy of the probability distribution $p$.
Given two probability distributions $p\in \cP_n$ and $q\in \cP_m$, where {$m\leq n$}, we say that $q$ is a \textit{contiguous aggregation} of $p$ if there exist indices $0=i_0<i_1<\dots<i_{m-1}<i_m = n$ such that for each $j=1,\dots,m$ we have $q_j = \sum_{k=i_{j-1}+1}^{i_j} p_k$. 
For any $i, j\in \{1, \ldots, n\}$, $i>j$,
 we call the sequence of consecutive elements $p_{j}, \ldots , p_i$  
a \textit{segment} of $p.$
Let $CA(p)$ denote the set of all probability distributions that are contiguous aggregations of $p$, and $R$ be a positive number.
The optimization problem  we are interested in consists of computing:
\begin{equation}\label{eq:opt_prob}
    \max_{q \in CA(p)} H(q) \quad 
    \text{subject to} \quad H(q)\leq R.
\end{equation}

\subsection{Our Results}
In Section \ref{sec:Dyn}, we present a Dynamic Programming approach to compute the solution
of the optimization problem (\ref{eq:opt_prob}). Unfortunately, the algorithm has 
$\Theta(2^n)$-time-complexity in the worst case. Successively, in Section \ref{sec:Greedy}
we present two time-efficient approximation algorithms for the problem  (\ref{eq:opt_prob}).
The first algorithm returns a value that differs from the optimum by at most 
$h(p^{\downarrow}_{\lfloor2^R\rfloor+1})$,
    where $h(\cdot)$ is the binary entropy, and $p^{\downarrow}_{\lfloor2^R\rfloor+1}$ is the $(\lfloor2^R\rfloor+1)$-th largest probability in $p$. The second algorithm returns a value that differs from the optimum by at most 
$1/({\lfloor({\lfloor2^R\rfloor-1})/{2}\rfloor+1}).$ We also analyze both algorithms in terms
of their multiplicative approximation factor. The latter analysis is more useful than the former 
---done 
in terms of the additive factor--- when the optimal solution to (\ref{eq:opt_prob})
is very small.

\section{Motivations and Related Work}
The problem of identifying \lq\lq informative" contiguous aggregations of a probability distribution appears in several areas of information theory and data processing. The basic motivation stems from the need to reduce the size of a source alphabet or the entropy of the source
while preserving as much statistical information as possible.
In the field of source coding, the technique known as Alphabet Partitioning is used to speed up 
 coding algorithms for large alphabets by grouping symbols into clusters (see  
\cite{bar, bar2, chen, Moffatt}, \cite[Chapt.4]{PS},  \cite{Rya}, \cite{Said2, Said3} and references 
therein quoted).
However, these works ignore the need to preserve the contiguity of symbols, which is crucial in
many applications. Some of these 
applications include alphabetic coding \cite{BDPV, BDDPV, cicalese2018maximum}, design of
scalar quantizers \cite{KK99}, and histogram segmentation \cite{muresan2008quantization}.
Yet, these last works evaluate the goodness of the contiguous aggregations according to
criteria different from ours. In particular, the paper \cite{muresan2008quantization}
starts by considering the problem of identifying a contiguous partition $\cal P$
that minimizes a distortion measure $D({\cal P})$, under the constraint that 
the entropy $H({\cal P})$ induced by $\cal P$ is bounded by some $R$, and 
provides algorithms to minimize the associated Lagrangian $J({\cal P}, \lambda)=D({\cal P})+\lambda H({\cal P})$.
Our criterion (\ref{eq:opt_prob}) can be seen as that of 
maximizing the mutual 
information between the original information source and its compressed version, under the constraint that the compressed version has entropy bounded by an input parameter $R$.
Under this point of view, our problem is in the same line of research as 
\cite{ebrahimi2024minimum, muresan2008quantization,Ng, slo, ss}.  We also point out that 
 the same problem as (\ref{eq:opt_prob}) has been 
studied in \cite{ebrahimi2024minimum} and \cite{noi}, but without the contiguity
constraint considered here.
Finally, we remark that the problem of partitioning data (or probability distributions) into
contiguous segments, optimizing  several information theoretic measures, 
has been studied in multiple other contexts, e,g., see  \cite{CK, ER, PM86} and references therein quoted.
 
\remove{A primary application of contiguous aggregation is found in the design of scalar quantizers. In this context, the source symbols are typically ordered (e.g., signal intensities or histogram bins), and the quantization process consists of grouping consecutive values into a smaller set of discrete levels. Kampke and Kober \cite{KK99} explore these discrete signal quantizations, emphasizing the structural properties required to maintain fidelity in discrete representations.

More recently, Muresan and Effros \cite{muresan2008quantization} framed quantization as a problem of histogram segmentation. They demonstrate that for many network-based systems, optimal scalar quantizer design is equivalent to finding a contiguous partition of a source histogram that optimizes a specific performance metric. Our problem generalizes this by seeking a maximum entropy representation under a specific entropic budget $R$ which can be viewed as finding the "most informative" aggregation that stays within a prescribed complexity or rate constraint.
}
 
\section{A Dynamic Programming Algorithm}\label{sec:Dyn}
Firstly, we would like to point out that a \lq\lq natural'' Dynamic Programming (DP)
algorithm 
does not always provide a solution to the optimization problem (\ref{eq:opt_prob}). 
One could think to proceed as follows: For each value $k\in\{1, \ldots, n\}$, compute and store
the \textit{maximum-entropy} contiguous aggregation of $p=(p_1,\dots,p_n)$
consisting of \textit{exactly} $k$ contiguous segments of  $p$; 
this computation can be performed in overall time $O(n^4)$ using the techniques of 
\cite{cicalese2018maximum}.
Subsequently,
output the $m$-segments aggregation $\overline{q}$, for some $m\in \{1, \ldots, n\}$, that, 
among all previously computed 
and stored aggregations \textit{satisfying} the bound
$H(q)\leq R$, has maximum entropy. However, a moment's thought reveals that there could be 
another $m'\neq m$-segments aggregation $q'$ of $p$ such that  $H(q')>H(\overline{q})$,
that has \textit{not} been stored since \textit{is not} 
a maximum entropy aggregation. That is,  there exists an  
$m'$-segments aggregation $q''$ of $p$ such that $H(q'')>H(q')$, \textit{but} $q''$ 
is not 
admissible 
since 
 $H(q'')>R$.
A correct DP algorithm is the following. Let 
$P_i = \sum_{k=1}^i p_k, \mbox{ for } i=0, \dots, n,$ where we set $P_0 = 0$ by convention.
Notice that for $i, j\in \{1, \ldots, n\}$, $i>j$
we have that  $P_i - P_j=\sum_{k=j+1}^ip_k$.
We define the entropic contribution of the single segment $p_{j+1}, \ldots , p_i$
as 
$E(j, i) = - (P_i - P_j) \log_2 (P_i - P_j).$
For each $k\leq i$
 we define
\begin{align}\label{eq:DP}
  \!\!\!\!\!DP[k,i]=&\{h \mid \exists \mbox{ a partition $q$ in $k$ segments of}\
      p_1, \ldots , p_i\nonumber \\
      &\qquad \mbox{  such that the entropy of $q$ is }
 \!=h \leq R\}.
\end{align}
Here, for convenience, we call \lq\lq entropy" any quantity of the form $-\sum_{a=1}^s
q_a\log_2q_a$, for arbitrary positive numbers $q_1, \ldots, q_s$  such that $\sum_{a=1}^sq_a\leq 1$.
For $i=1, \ldots, n$,  the base cases of (\ref{eq:DP}) are 
\[
DP[1,i] = \begin{cases} \{ E(0,i)\} & \text{if } E(0,i) \le R, \\ \emptyset & \text{otherwise.} \end{cases}
\]
All values $DP[k,i]$ can be computed by the recurrence relation
\begin{align}\label{DP}
DP[k,i] =& \bigcup_{j=k-1}^{i-1} \Big\{ v + E(j, i) \mid v \in 
DP[k-1,j] \nonumber\\
& \qquad\quad \ \ \text{ AND } (v + E(j, i) \le R) \Big\}.
\end{align}
The solution to (\ref{eq:opt_prob})
can be computed by
\[
H_{opt} = \max \left( \bigcup_{k=1}^{n} DP[k,n] \right).
\]
For each allowed value of $k$ and $i$,
one can see that the cardinality of $DP[k,i]$ could be as large as 
$\binom{i-1}{k-1}$. This happens in the unlikely (but 
possible) case when \textit{each} different partition 
generates a different entropy value. If this unfavorable case occurs
often, 
then  the cardinality 
of $\bigcup_{k=1}^{n} DP[k,n]$ can  be as large as $\Theta(2^n)$; therefore, the 
 DP algorithm based on (\ref{DP}) has $\Theta(2^n)$ worst-case time and space complexity.

\section{Efficient Greedy Approximation Algorithms}\label{sec:Greedy}
In this section, we present two greedy algorithms for computing near-optimal 
contiguous aggregations. To avoid trivialities, we assume $R<\log_2 n$. 
Indeed, since the entropy $H(p)$ satisfies $H(p)\leq \log_2 n$, for $R\geq \log_2 n$ the optimal solution would be $p$ itself. We recall that  
 $p^{\downarrow}_i$ denotes the $i$-th largest probability of the distribution $p$.
\remove{In the algorithms we propose, we follow an approach 
inspired from \cite{ebrahimi2024minimum, noi}. 
Specifically, 
we design an algorithm that finds a contiguous aggregation of $p$ whose entropy is at most $h(p^{\downarrow}_{\lfloor2^R\rfloor+1})$ bits away from the \textit{optimal} one, where $h(x)=-x\log_2 x -(1-x)\log_2 (1-x)$ is the binary entropy function and $p^{\downarrow}_{\lfloor2^R\rfloor+1}$ is the $(\lfloor2^R\rfloor+1)$-th largest probability in $p$.
The pseudo-code of the algorithm is presented in Algorithm \ref{alg:version_1}.}
The algorithm constructs a sequence of contiguous aggregations $p^{(0)},\dots,p^{(n-1)}$ 
of decreasing entropy,  and outputs the first aggregation whose entropy does not exceed $R$. It starts by computing the indexes $i_1,\dots,i_{\lfloor2^R\rfloor}$ of the $\lfloor2^R\rfloor$ largest probabilities in $p$. Then, at each step, the algorithm aggregates a pair of contiguous probabilities with the constraint that it \textit{never} merges a pair in which both masses contain the initial probabilities in $\{i_1,\dots,i_{\lfloor2^R\rfloor}\}$. Specifically, starting from the rightmost large probability $p_{i_{\lfloor2^R\rfloor}}$, the algorithm aggregates it, step by step, with the adjacent probabilities on its right (if any). Once no more rightward merges are possible, it proceeds to merge {the mass containing $p_{i_{\lfloor2^R\rfloor}}$}  with the adjacent probabilities on its left while avoiding merging any of the $\lfloor2^R\rfloor$ largest masses with each other. In fact, as soon as the algorithm encounters $p_{i_{\lfloor2^R\rfloor-1}}$, it proceeds to aggregate $p_{i_{\lfloor2^R\rfloor-1}}$ with the adjacent probabilities to its left until it finds the next large probability, $p_{i_{\lfloor2^R\rfloor-2}}$. Thus, this pattern repeats itself until the algorithm finds the first contiguous aggregation whose entropy does not exceed $R$.
The pseudo-code of the algorithm is presented in Algorithm \ref{alg:version_1}.

 \vspace*{-.1cm}
\begin{algorithm}
\small
\caption{AvoidMaxAggregation}\label{alg:version_1}
\KwIn{A probability distribution $p=(p_1,\dots,p_n)\in \cP_n$ and a real value $R$}
\If{$H(p)\leq R$}{
    \Return $p$
}
Let $i_1<i_2<\dots<i_{\lfloor 2^R\rfloor}$ be the indexes of the $\lfloor2^R\rfloor$ largest probabilities of $p$

$p^{(0)} \gets p$

\For{$j \gets 1$ \textbf{to} $n-i_{\lfloor2^R\rfloor}$}{
    $p^{(j)} \gets (p^{(j-1)}_{1},\dots,p^{(j-1)}_{i_{\lfloor2^R\rfloor}}+p^{(j-1)}_{i_{\lfloor2^R\rfloor}+1}, p^{(j-1)}_{i_{\lfloor2^R\rfloor}+2},\dots)$

    \If{$H(p^{(j)})\leq R$}{
        \Return $p^{(j)}$
    }
}

$p^{(n-i_{\lfloor2^R\rfloor})} \gets (p_1,\dots,p_{i_{\lfloor2^R\rfloor}-1}, \sum_{k=i_{\lfloor2^R\rfloor}}^n p_k)$

$\ell \gets \lfloor2^R\rfloor$

\For{$j \gets n-i_{\lfloor2^R\rfloor}+1$ \textbf{to} $n-1$}{
    \If{$n-j == i_{\ell-1}$}{
        $p^{(j)}\gets p^{(j-1)}$
        
        $\ell \gets \ell-1$
        
        $j \gets j+1$
    }
    $p^{(j)} \gets (p^{(j-1)}_1,\dots, p^{(j-1)}_{n-j-1}, p^{(j-1)}_{n-j} + p^{(j-1)}_{n-j+1}, \dots)$

    \If{$H(p^{(j)})\leq R$}{
        \Return $p^{(j)}$
    }
}

\end{algorithm}

\vspace*{-.2cm}
Algorithm \ref{alg:version_1} has $O(n)$ 
time complexity, where $n$ is the size of the probability distribution $p$.
This is because the computation of the $(n-\lfloor2^R\rfloor)$-th smallest element of a set
of $n$ elements can be performed in $O(n)$-time (see Sec. 9.3 of \cite{CLRS}). 
Successively, one can scan $p$ to identify the indices of the $\lfloor2^R\rfloor$ largest probabilities.
\remove{$O(nR)$ time complexity, where $n$ is the size of the probability distribution $p$ and $R<\log_2 n$. This is because the computation of the indices of the $\lfloor2^R\rfloor$ largest probabilities can be performed in $O(nR)$ time using a heap data structure of size $\lfloor2^R\rfloor$ to keep track of them.}
Furthermore, the two loops in Lines 5 and 11 require at most $n-1$ steps, and each step requires constant time since the entropy calculation can be done in constant time by computing only the decrease in entropy after merging the two probabilities in the distribution.
Before demonstrating that the entropy of the aggregation produced by Algorithm \ref{alg:version_1} is at most $h(p^{\downarrow}_{\lfloor 2^R \rfloor + 1})$ bits away from the optimal, we recall the following 
result from \cite{ebrahimi2024minimum,Sz}, which describes how the entropy of a distribution decreases when two probabilities are aggregated.

\begin{lemma}[\cite{ebrahimi2024minimum,Sz}]
\label{lemma:DeltaH}
Let $p=(p_1,\dots,p_n)\in \cP_n$ be a probability distribution and let $q$ be an aggregation of $p$ obtained by merging two probabilities, $p_i$ and $p_j$ with $i<j$. Then, the amount of decrease 
between the entropy of  $p$ and $q$ 
\begin{align*}
    \Delta_H(p_i,p_j) =& H(p)-H(q)\\
                    =& p_i \log_2 \frac{1}{p_i} + p_j\log_2 \frac{1}{p_j} -(p_i+p_j) \log_2\frac{1}{p_i+p_j},
\end{align*}
satisfies the following properties:
\begin{enumerate}
    \item $\Delta_H(\cdot,\cdot)$ is monotonically increasing in both arguments;
    \item For any $x\in (0,1)$ it holds that 
    $$\Delta_H(x,1-x) = -x\log_2 x -(1-x)\log_2 (1-x) = h(x).$$
\end{enumerate}
\end{lemma}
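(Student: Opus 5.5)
The statement is Lemma~\ref{lemma:DeltaH}, attributed to \cite{ebrahimi2024minimum,Sz}. It is a pure calculus fact about the function
\[
\Delta_H(x,y)=-x\log_2 x-y\log_2 y+(x+y)\log_2(x+y),
\]
defined for $x,y>0$ with $x+y\le 1$, so I would prove it directly.

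First I dispose of the identity $H(p)-H(q)=\Delta_H(p_i,p_j)$. Merging $p_i$ and $p_j$ leaves every other term $-p_k\log_2 p_k$ of the entropy unchanged. It only replaces the two terms $-p_i\log_2 p_i-p_j\log_2 p_j$ by the single term $-(p_i+p_j)\log_2(p_i+p_j)$. Subtracting gives exactly the displayed expression. This holds whether or not $i,j$ are adjacent.

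Second, for monotonicity I differentiate in $x$ with $y$ fixed:
\[
\frac{\partial \Delta_H}{\partial x}=-\log_2 x-\log_2 e+\log_2(x+y)+\log_2 e=\log_2\frac{x+y}{x}.
\]
This is strictly positive for $y>0$, so $\Delta_H$ is increasing in $x$. By the symmetry $\Delta_H(x,y)=\Delta_H(y,x)$, the same holds in $y$. At the boundary, where an argument is $0$, the convention $0\log 0=0$ together with continuity extends monotonicity to the closed domain.

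Third, property 2 is immediate. With $y=1-x$ we have $x+y=1$, so the last term $(x+y)\log_2(x+y)$ vanishes. The remaining expression is $-x\log_2 x-(1-x)\log_2(1-x)=h(x)$.

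No step is a genuine obstacle. The only care needed is the domain: monotonicity is claimed on the region $x+y\le 1$, and the derivative computation is valid throughout its interior. The lemma is then ready for use in bounding the entropy drop of the last merge performed by Algorithm~\ref{alg:version_1}.
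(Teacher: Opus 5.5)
Your proof is correct. The paper gives no proof of this lemma and simply quotes it from the cited works. Your argument is the standard self-contained verification: the unchanged terms cancel to give the identity, $\partial\Delta_H/\partial x=\log_2\frac{x+y}{x}>0$ together with symmetry gives property~1, and the term $(x+y)\log_2(x+y)$ vanishes at $x+y=1$, giving property~2. Note that your computation gives monotonicity on all of $(0,\infty)^2$. That covers exactly how the paper uses the lemma, namely $\Delta_H(p_\ell,b)\le\Delta_H(p_\ell,1-p_\ell)$ whenever $b\le 1-p_\ell$.
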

We now prove the approximation guarantee of {Algorithm 1}. During the analysis, ties 
in the relation order among
the probabilities are broken arbitrarily.

\begin{theorem}\label{th:avoid_max_add}
    Given a probability distribution $p=(p_1,\dots,p_n)\in \cP_n$, let us denote by $OPT(p,R)$ the optimal solution to the optimization problem defined in (\ref{eq:opt_prob}), and let $h(\cdot)$ be the binary entropy function. Algorithm \ref{alg:version_1} produces a contiguous aggregation $q$ of $p$ that satisfies
    \begin{equation}
        OPT(p,R)-H(q) \leq h(p^{\downarrow}_{\lfloor2^R\rfloor+1}),
    \end{equation}
    where $p^{\downarrow}_{\lfloor2^R\rfloor+1}$ is the $(\lfloor2^R\rfloor+1)$-th largest probability in $p$.
\end{theorem}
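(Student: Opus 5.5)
The plan is to compare $H(q)$ with the trivial upper bound $OPT(p,R)\le R$ and to exploit the step at which Algorithm~\ref{alg:version_1} stops.

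\emph{Case analysis.} If $H(p)\le R$, the algorithm returns $p$. In that case $OPT(p,R)=H(p)=H(q)$, because any aggregation has entropy at most $H(p)$. Otherwise, we first check that the algorithm always returns something. The final aggregation $p^{(n-1)}$ merges everything into a single mass, or at worst into at most $\lfloor 2^R\rfloor$ masses, so its entropy is at most $\log_2\lfloor 2^R\rfloor\le R$. Let $t$ be the first index with $H(p^{(t)})\le R$. Then $t\ge 1$ and $H(p^{(t-1)})>R\ge OPT(p,R)$. Hence
\[
OPT(p,R)-H(q)\le H(p^{(t-1)})-H(p^{(t)})=\Delta_H(a,b),
\]
where $a,b$ are the two masses of $p^{(t-1)}$ merged at step $t$, and the equality is Lemma~\ref{lemma:DeltaH}.

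\emph{Bounding $\Delta_H(a,b)$.} The structural invariant of the algorithm is that it never merges two masses that both contain one of the indices $i_1,\dots,i_{\lfloor 2^R\rfloor}$. So at every merge, one of the two masses, say $b$, contains none of the $\lfloor 2^R\rfloor$ largest probabilities. We want to show that such a mass satisfies $b\le p^{\downarrow}_{\lfloor2^R\rfloor+1}$, and ideally that it is a single original probability. In the first phase, the mass on the right of $p_{i_{\lfloor 2^R\rfloor}}$ is always an original $p_k$. In the second phase, the algorithm absorbs the left neighbour into the growing block, and that neighbour is again an original $p_k$ that is not one of the large ones. The skip step at $n-j=i_{\ell-1}$ ensures that the block containing a large probability is never merged with the next large one, and that the absorbed element is a singleton. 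So $b=p_k$ with $k\notin\{i_1,\dots\}$, and therefore $b\le p^{\downarrow}_{\lfloor2^R\rfloor+1}$.

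Next, $a+b\le 1$, so $a\le 1-b$. By the monotonicity in Lemma~\ref{lemma:DeltaH}, $\Delta_H(a,b)\le \Delta_H(1-b,b)=h(b)$.

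\emph{Main obstacle.} The step from $h(b)$ to $h(p^{\downarrow}_{\lfloor2^R\rfloor+1})$ is the main obstacle, because $h$ is increasing only on $[0,1/2]$. I would handle it with monotonicity of $\Delta_H$ in the second argument applied directly: $\Delta_H(a,b)\le\Delta_H(a,p^{\downarrow}_{\lfloor2^R\rfloor+1})$. This requires $a+p^{\downarrow}_{\lfloor2^R\rfloor+1}\le 1$. That holds because $a$ and $p^{\downarrow}_{\lfloor2^R\rfloor+1}$ are supported on disjoint sets of original indices: $a$ contains a large element or some other element, while $b$'s index is distinct from the index achieving $p^{\downarrow}_{\lfloor2^R\rfloor+1}$, or equals it. The delicate point is that $a$ might contain the element realizing $p^{\downarrow}_{\lfloor2^R\rfloor+1}$. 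In that case I would instead argue directly that $a\le 1-b$ and $b\le x:=p^{\downarrow}_{\lfloor2^R\rfloor+1}\le 1/2$. Here $x\le 1/2$ holds since $\lfloor 2^R\rfloor\ge1$ and the top element is at least $x$. Then $\Delta_H(a,b)\le\Delta_H(1-b,b)=h(b)\le h(x)$, since $b\le x\le 1/2$. This last route avoids the disjointness issue entirely, so I would adopt it as the main argument, and check carefully the index bookkeeping in the second loop that guarantees $b$ is a single non-large original probability.
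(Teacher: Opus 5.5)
Your proposal is correct and is essentially the paper's proof. You bound $OPT(p,R)-H(q)\le R-H(p^{(t)})<H(p^{(t-1)})-H(p^{(t)})=\Delta_H(a,b)$, use the invariant that one merged mass is an untouched $p_\ell$ with $\ell\notin\{i_1,\dots,i_{\lfloor 2^R\rfloor}\}$, and conclude with $\Delta_H(a,b)\le\Delta_H(1-b,b)=h(b)\le h(p^{\downarrow}_{\lfloor 2^R\rfloor+1})$ because $b\le p^{\downarrow}_{\lfloor 2^R\rfloor+1}\le 1/2$; the detour through monotonicity in the second argument is unnecessary, and your explicit handling of $H(p)\le R$ is a detail the paper leaves implicit.

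The bookkeeping you deferred does matter. Both you and the paper rely on the invariant as described in the prose, but the literal pseudo-code checks the skip with a single \emph{if} in the second loop. After skipping $i_{\ell-1}$, it merges $p_{i_{\ell-1}-1}$ into $p_{i_{\ell-1}}$ even when $i_{\ell-1}-1=i_{\ell-2}$ is itself a large index. For example, $p=(0.05,0.3,0.3,0.04,0.01,0.3)$ with $R=1.6$ then ends by merging $p_2$ and $p_3$, which violates the bound. The skip therefore has to be repeated (a \emph{while}) for the invariant, and hence the theorem, to hold.
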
 
\begin{proof}
    Algorithm \ref{alg:version_1} produces a sequence of contiguous aggregations $p^{(0)},p^{(1)},\dots, p^{(n-1)}$, and outputs the first aggregation $p^{(i)}$ for which it holds that $H(p^{(i)})\leq R$ and $H(p^{(i-1)})>R$. Let $p^{(i-1)}_k$ and $p^{(i-1)}_{k+1}$ be the pair of probabilities that are aggregated to produce $p^{(i)}$. Thus, it holds that
    \begin{align}
        OPT&(p,R) - H(p^{(i)}) \leq R - H(p^{(i)})\nonumber\\
        &\qquad\qquad\qquad \qquad\qquad \mbox{(since $OPT(p,R)\leq R$)}\nonumber\\
        &< H(p^{(i-1)})-H(p^{(i)}) \quad\mbox{(since $H(p^{(i-1)})>R$)}\nonumber\\
        &= \Delta_H(p^{(i-1)}_k,p^{(i-1)}_{k+1})\label{eq:delta_function}\\
        &\!\!\!\!\!\!\!\!\!\!\mbox{(since $p^{(i-1)}_k$ and $p^{(i-1)}_{k+1}$} 
        \mbox{ is the pair that is merged together)} \nonumber
    \end{align}
    An important observation is that Algorithm \ref{alg:version_1} \textit{never} merges 
    with each other the $\lfloor2^R\rfloor$-th largest probabilities, and this 
    condition holds throughout its execution since the algorithm
     requires \textit{at most} $n-\lfloor2^R\rfloor$ merges to stop.
    In fact, after $n-\lfloor2^R\rfloor$ merges, the size of the obtained contiguous aggregation is $\lfloor2^R\rfloor$, and its entropy is upper bounded by $\log_2 \lfloor2^R\rfloor\leq \log_2 2^R=R$.
    Additionally, at each step, the algorithm merges a pair of probabilities such that:
    \begin{itemize}
        \item the first probability is either one of the $\lfloor2^R\rfloor$-th largest probabilities, or a sum of probabilities that includes exactly one of the $\lfloor2^R\rfloor$-th largest,
        \item the second one is a probability $p_{\ell}$, where $\ell \in \{1,\dots,n\}\setminus\{i_1,\dots,i_{\lfloor2^R\rfloor}\}$, that has not been merged before.
    \end{itemize}
    Thus, let us assume that $p^{(i-1)}_k$ is the probability of the second type, i.e., $p^{(i-1)}_k = p_{\ell}$ for some $\ell \in\{1,\dots,n\}\setminus\{i_1,\dots,i_{\lfloor2^R\rfloor}\}$. We can rewrite (\ref{eq:delta_function}) as follows
    \begin{align}
        OPT&(p, R) - H(p^{(i)}) < \Delta_H(p^{(i-1)}_k,p^{(i-1)}_{k+1})\nonumber\\
        &=\Delta_H(p_{\ell},p^{(i-1)}_{k+1})\nonumber\\
        &\leq \Delta_H(p_{\ell},1-p_{\ell})\quad\mbox{(from 1. of Lemma \ref{lemma:DeltaH})}\nonumber\\
        &= h(p_{\ell})\quad\mbox{(from 2. of Lemma \ref{lemma:DeltaH})}\nonumber\\
        &\leq \max_{\ell \in \{1,\dots,n\}\setminus\{i_1,\dots,i_{\lfloor2^R\rfloor}\}} h(p_{\ell})
        \leq h(p^{\downarrow}_{\lfloor2^R\rfloor+1}).\nonumber\\
        &\mbox{(since $0.5\geq p^{\downarrow}_{\lfloor2^R\rfloor+1}\geq \max_{\ell\in \{1,\dots,n\}\setminus\{i_1,\dots,i_{\lfloor2^R\rfloor}\}} p_{\ell}$}\nonumber\\
        &\quad\mbox{and $h(x)$ is increasing in $(0,1/2]$).}\nonumber
    \end{align}
\end{proof}

\vspace*{-.3cm}
Theorem \ref{th:avoid_max_add} establishes an additive approximation to the quality of the solution returned by Algorithm \ref{alg:version_1}. We can also provide a multiplicative approximation for it, as shown in the following theorem.

\vspace*{-.2cm}
\begin{theorem}\label{th:avoid_max_mul}
     Given a probability distribution $p=(p_1,\dots,p_n)\in \cP_n$.
     Algorithm \ref{alg:version_1} produces a contiguous aggregation $q$ of $p$ that satisfies
    \begin{equation}
       H(q)>\left(1-\frac{h(p^{\downarrow}_{\lfloor2^R\rfloor+1})}{R}\right)OPT(p,R),
    \end{equation}
    where $p^{\downarrow}_{\lfloor2^R\rfloor+1}$ is the $(\lfloor2^R\rfloor+1)$-th largest probability in $p$.
\end{theorem}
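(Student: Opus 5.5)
The multiplicative bound follows by rescaling the additive guarantee. I will reuse the chain of inequalities in the proof of Theorem \ref{th:avoid_max_add}, handling separately the trivial case in which no merge is performed.

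First, if $H(p)\leq R$, Algorithm \ref{alg:version_1} returns $q=p$. Every contiguous aggregation of $p$ is obtained by merging, so by Lemma \ref{lemma:DeltaH} its entropy is at most $H(p)$. Hence $OPT(p,R)=H(p)=H(q)$. The inequality then reads $H(q)>(1-h(\cdot)/R)H(q)$, which is strict as soon as $h(p^{\downarrow}_{\lfloor2^R\rfloor+1})>0$ and $H(q)>0$. The degenerate situations in which the bound holds only with equality will be noted explicitly, as the paper implicitly does. So assume the algorithm outputs $q=p^{(i)}$ with $H(p^{(i-1)})>R\geq H(p^{(i)})$.

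In that case the proof of Theorem \ref{th:avoid_max_add} gives the strict inequality
\[
R-H(q) < h(p^{\downarrow}_{\lfloor2^R\rfloor+1}).
\]
The step I need is to get the relative version, $H(q)/R > 1-h(p^{\downarrow}_{\lfloor2^R\rfloor+1})/R$. Dividing the display by $R>0$ gives $1-H(q)/R<h(p^{\downarrow}_{\lfloor2^R\rfloor+1})/R$, which rearranges to
\[
H(q) > \Bigl(1-\frac{h(p^{\downarrow}_{\lfloor2^R\rfloor+1})}{R}\Bigr)R .
\]

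Finally, I use $OPT(p,R)\leq R$ to replace $R$ by $OPT(p,R)$ on the right-hand side. This is valid only if the coefficient $1-h(p^{\downarrow}_{\lfloor2^R\rfloor+1})/R$ is nonnegative. When it is nonnegative, multiplying it by $OPT(p,R)\leq R$ can only decrease the right-hand side, so the strict inequality is preserved. When it is negative, the right-hand side is negative (or zero if $OPT(p,R)=0$), while $H(q)\geq 0$. The claim is then immediate, with strictness failing only in the trivial case $H(q)=OPT(p,R)=0$.

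The main obstacle is this sign bookkeeping together with the strictness of the inequality. The two edge cases to treat carefully are $H(p)\leq R$, where no merge occurs and the strict bound comes from Theorem \ref{th:avoid_max_add}'s chain not applying, and a negative coefficient. The core inequality itself is just Theorem \ref{th:avoid_max_add}'s strict bound combined with $OPT(p,R)\leq R$.
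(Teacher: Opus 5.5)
Your proof is correct, but it is organized differently from the paper's.

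The paper works with the ratio throughout. It writes $H(p^{(i)})/OPT(p,R)\geq H(p^{(i)})/R> H(p^{(i)})/H(p^{(i-1)})=1-\Delta_H(p^{(i-1)}_k,p^{(i-1)}_{k+1})/H(p^{(i-1)})$. It then re-bounds $\Delta_H$ by $h(p^{\downarrow}_{\lfloor2^R\rfloor+1})$ as in Theorem~\ref{th:avoid_max_add}, and only at the end replaces $H(p^{(i-1)})$ by $R$.

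You instead take the strict intermediate inequality $R-H(q)<\eta$, with $\eta=h(p^{\downarrow}_{\lfloor2^R\rfloor+1})$, from the proof of Theorem~\ref{th:avoid_max_add}. You rewrite it as $H(q)>(1-\eta/R)R$ and pass to $OPT(p,R)$ using $OPT(p,R)\leq R$ plus a sign check on $1-\eta/R$. This makes the multiplicative bound a direct corollary of the additive analysis. You are right to use $R-H(q)<\eta$ rather than the stated conclusion $OPT(p,R)-H(q)\leq\eta$, which would only give the weaker factor $1-\eta/OPT(p,R)$.

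Your route also has two advantages over the paper's. It never divides by $OPT(p,R)$. It explicitly treats the branch $H(p)\leq R$, which the paper's proof ignores (there is no $p^{(i-1)}$ in that branch).

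Your strictness caveats are genuinely needed, not pedantry. For $p=(1/2,1/2)$ and $R=1/2$ (admissible, since $R<\log_2 n$), the algorithm returns $q=(1)$ and $OPT(p,R)=0$. The claimed strict inequality then reads $0>0$; the paper's chain hides this failure inside the division by $OPT(p,R)$. For $p=(1/2,1/2,0)$ and $R=1$, the algorithm returns $p$ with $\eta=h(0)=0$, giving $1>1$.

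In the merging branch with $OPT(p,R)>0$, and in the trivial branch with $H(p)>0$ and $\eta>0$, your argument yields the strict bound exactly as stated.
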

\begin{proof}
   Let us recall that  Algorithm \ref{alg:version_1} returns the first contiguous aggregation $p^{(i)}$ in the sequence $p^{(0)}, p^{(1)},\dots, p^{(n-1)}$ for which it holds that $H(p^{(i)})\leq R$ and $H(p^{(i-1)})>R$. Let $p^{(i-1)}_k$ and $p^{(i-1)}_{k+1}$ be the pair of probabilities that are merged to produce $p^{(i)}$. Thus, we have that
    \begin{align}
        \frac{H(p^{(i)})}{OPT(p,R)}&\geq \frac{H(p^{(i)})}{R}\quad\mbox{(since $R\geq OPT(p,R)$)}\nonumber\\
        >&\frac{H(p^{(i)})}{H(p^{(i-1)})}\quad\mbox{(since $H(p^{(i-1)})>R$)}\nonumber\\
        =&\frac{H(p^{(i-1)})-(H(p^{(i-1)})-H(p^{(i)}))}{H(p^{(i-1)})}\nonumber\\
        =&\frac{H(p^{(i-1)})- \Delta_H\left(p^{(i-1)}_k, p^{(i-1)}_{k+1}\right)}{H(p^{(i-1)})}\nonumber\\
        =& 1 - \frac{\Delta_H\left(p^{(i-1)}_k, p^{(i-1)}_{k+1}\right)}{H(p^{(i-1)})}.\label{eq:mul_bound_step_1}
    \end{align}
    As argued in the proof of Theorem \ref{th:avoid_max_add}, Algorithm \ref{alg:version_1} never merges with each other the $\lfloor2^R\rfloor$-th largest probabilities. Thus, at each step, the algorithm merges a pair of probabilities such that one is a probability containing one of the initial $\lfloor2^R\rfloor$-th largest probabilities, and the other is a probability $p_{\ell}$ that has not been previously merged, where $\ell \in \{1,\dots,n\}\setminus\{i_1,\dots,i_{\lfloor2^R\rfloor}\}$.
    Assuming that 
    $p^{(i-1)}_k = p_{\ell}$ for some $\ell \in\{1,\dots,n\}\setminus\{i_1,\dots,i_{\lfloor2^R\rfloor}\}$, we can rewrite (\ref{eq:mul_bound_step_1}) as follows

    \vspace*{-.4cm}
    \begin{align}
        \frac{H(p^{(i)})}{OPT(p,R)}
         &>1 - \frac{\Delta_H\left(p^{(i-1)}_k, p^{(i-1)}_{k+1}\right)}{H(p^{(i-1)})}\nonumber\\
       & =1 - \frac{\Delta_H\left(p_{\ell},p^{(i-1)}_{k+1}\right)}{H(p^{(i-1)})}\nonumber\\
        &\geq  1 - \frac{\Delta_H\left(p_{\ell},1-p_{\ell}\right)}{H(p^{(i-1)})} \quad\mbox{(from 1. of Lemma \ref{lemma:DeltaH})}\nonumber\\
         &= 1- \frac{h(p_{\ell})}{H(p^{(i-1)})}\quad\mbox{(from 2. of Lemma \ref{lemma:DeltaH}).}\label{eq:mul_bound_step_2}
    \end{align}
    Since $0.5\geq p^{\downarrow}_{\lfloor2^R\rfloor+1}\geq \max_{\ell\in \{1,\dots,n\}\setminus\{i_1,\dots,i_{\lfloor2^R\rfloor}\}} p_{\ell}$ and the function $h(\cdot)$ is increasing in $(0,1/2]$, from (\ref{eq:mul_bound_step_2}) it follows that
    \begin{align}
         \frac{H(p^{(i)})}{OPT(p,R)}>&1- \frac{h(p_{\ell})}{H(p^{(i-1)})}
         \geq 1- \frac{h(p^{\downarrow}_{\lfloor2^R\rfloor+1})}{H(p^{(i-1)})}\nonumber\\
         >&1- \frac{h(p^{\downarrow}_{\lfloor2^R\rfloor+1})}{R}\quad\mbox{(since $R< H(p^{(i-1)})$).}\label{eq:mul_bound_step_3}
    \end{align}
    Finally, rearranging (\ref{eq:mul_bound_step_3}) yields (\ref{eq:conc1}) that concludes the proof.
    \begin{equation}\label{eq:conc1}
        H(p^{(i)})>\left(1-\frac{h(p^{\downarrow}_{\lfloor2^R\rfloor+1})}{R}\right)OPT(p,R).
    \end{equation}
\end{proof}

\vspace*{-.25cm}
We now present an alternative approach leading to a different approximation algorithm, whose pseudocode is illustrated in Algorithm \ref{alg:min_min}. Like the previous one, Algorithm \ref{alg:min_min} iteratively constructs a sequence of contiguous aggregations $p^{(1)},\dots,p^{(n-1)}$ of decreasing entropy. At each step, among all the pairs of consecutive probabilities, Algorithm \ref{alg:min_min} aggregates the one whose sum is the smallest.

\vspace*{-.15cm}
\begin{algorithm}
\small
\caption{Min-Min}\label{alg:min_min}
\KwIn{A probability distribution $p=(p_1,\dots,p_n)\in \cP_n$ and a real value $R$}
\If{$H(p)\leq R$}{
    \Return $p$
}

$p^{(0)} \gets p$

\For{$i \gets 1$ \textbf{to} $n-1$}{
    $p^{(i)} \gets$ Merge the pair of consecutive probabilities in $p^{(i-1)}$ that has the smallest sum 
    
    \If{$H(p^{(i)})\leq R$}{
        \Return $p^{(i)}$
    }
}

\end{algorithm}

\vspace*{-.3cm}
Algorithm \ref{alg:min_min} has $O(n\log n)$ time complexity
since  it executes a loop with at most $O(n)$ steps, and each step can be performed in $O(\log n)$. Indeed, the search for the pair of consecutive probabilities that have the smallest sum can be performed in time $O(\log n)$ using a heap data structure, while the entropy computation can be done in constant time by calculating only the entropy decrease.
The following result holds.

\vspace*{-.2cm}
\begin{theorem}\label{th:min-min_add}
    Given a probability distribution $p=(p_1,\dots,p_n)\in \cP_n$, let us denote by $OPT(p,R)$ the optimal solution to the optimization problem defined in (\ref{eq:opt_prob}). Algorithm \ref{alg:min_min} produces a contiguous aggregation $q$ of $p$ that satisfies
    \begin{equation}
         OPT(p,R)-H(q) < \frac{1}{\lfloor\frac{\lfloor2^R\rfloor-1}{2}\rfloor+1}.
    \end{equation}
\end{theorem}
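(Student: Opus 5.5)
We follow the same template as in the proof of Theorem \ref{th:avoid_max_add}. Algorithm \ref{alg:min_min} returns the first $p^{(i)}$ with $H(p^{(i)})\leq R$ and $H(p^{(i-1)})>R$. Hence
\[
OPT(p,R)-H(p^{(i)})\leq R-H(p^{(i)})<H(p^{(i-1)})-H(p^{(i)})=\Delta_H(a,b),
\]
where $a,b$ are the two consecutive masses of $p^{(i-1)}$ with smallest sum that get merged. The whole task is therefore to bound $\Delta_H(a,b)$ by a function of the sum $s=a+b$ alone, and then to bound $s$ using the fact that $p^{(i-1)}$ still has many parts.

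\textbf{Step 1: a one-variable bound.} For $a+b=s$ we have
\[
\Delta_H(a,b)=s\,h\!\left(\tfrac{a}{s}\right)\leq s,
\]
since $h\leq 1$. So it suffices to show $s\leq 1/(\lfloor(\lfloor2^R\rfloor-1)/2\rfloor+1)$. The strict inequality in the statement then comes from the strict step above.

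\textbf{Step 2: counting parts.} Let $m$ be the number of parts of $p^{(i-1)}$. Since $H(p^{(i-1)})>R$ and $H(p^{(i-1)})\leq\log_2 m$, we get $m>2^R$, hence $m\geq\lfloor2^R\rfloor+1$.

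\textbf{Step 3: disjoint pairs.} Partition the $m$ consecutive parts into $\lfloor m/2\rfloor$ disjoint consecutive pairs, namely $(1,2),(3,4),\dots$. Each pair has sum at least $s$, by minimality of the merged pair, and the pairs are disjoint with total mass at most $1$. Therefore $s\lfloor m/2\rfloor\leq 1$, that is, $s\leq 1/\lfloor m/2\rfloor$. With $m\geq\lfloor2^R\rfloor+1$ this gives
\[
\lfloor m/2\rfloor\geq\left\lfloor\frac{\lfloor2^R\rfloor+1}{2}\right\rfloor=\left\lfloor\frac{\lfloor2^R\rfloor-1}{2}\right\rfloor+1,
\]
which is exactly the denominator in the statement. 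Combining Steps 1–3 finishes the proof.

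The main obstacle I expect is Step 1: making sure the bound $\Delta_H\leq s$ is sharp enough and correctly stated. Lemma \ref{lemma:DeltaH} alone does not give it, so I would use the identity $\Delta_H(a,b)=s\,h(a/s)$ directly, which follows from expanding the logarithms. The pairing count in Step 3 is routine, but the parity bookkeeping, and checking that it matches the paper's floor expression, needs care.
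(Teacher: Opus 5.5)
Your proof is correct. Step 1 is exactly the paper's bound $\Delta_H(a,b)=c\,H(a/c,b/c)\le c$; the two arguments differ in how they bound the merged sum $s$.

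\textbf{The paper's route.} It does not argue at the stopping iteration. It notes that Min-Min performs at most $n-\lfloor 2^R\rfloor$ merges, and asserts without proof that the minimum consecutive pair-sum is nondecreasing along the run. Together these give that $s$ is at most the minimum pair-sum of the fixed list $p^{(n-\lfloor2^R\rfloor-1)}$, which has $\lfloor 2^R\rfloor+1$ elements. It then deletes the minimal pair from that list and pairs up what remains. The one pair that may straddle the deleted gap is handled via $p^{(m)}_{k-1}\ge p^{(m)}_{k+1}$ and $p^{(m)}_{k+2}\ge p^{(m)}_{k}$. Finally it re-appends $s$, obtaining $\lfloor(\lfloor2^R\rfloor-1)/2\rfloor+1$ masses, each at least $s$.

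\textbf{Your route.} You work directly at the stopping step. From $H(p^{(i-1)})>R$ and $H\le\log_2 m$ you get $m\ge\lfloor2^R\rfloor+1$ parts. You then pair $(1,2),(3,4),\dots$ without deleting anything. Every such pair is already a consecutive pair of $p^{(i-1)}$, so there is no straddling case, and you never need the monotonicity claim.

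\textbf{Comparison.} The two counts coincide, since $\lfloor M/2\rfloor=\lfloor (M-2)/2\rfloor+1$. So the paper's removal-and-reappend device gains nothing in the final bound. Your version is shorter and self-contained, and as a by-product gives $s\le 1/\lfloor m/2\rfloor$ in terms of the actual number $m$ of parts at termination. The paper's formulation isolates a quantity independent of when the algorithm stops, namely the minimum pair-sum of $p^{(n-\lfloor2^R\rfloor-1)}$, and reuses it in Theorem \ref{th:min-min_mul}. Your bound on $s$ would serve that proof equally well.

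\textbf{Trivial case.} Both arguments skip the case $H(p)\le R$, where the output is $p$ itself. There $OPT(p,R)=H(p)$, because merging never increases entropy, so the difference is $0$.
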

\begin{proof}
    Algorithm \ref{alg:min_min} produces a sequence of contiguous aggregations $p^{(0)},\dots, p^{(n-1)}$ and outputs the first aggregation $p^{(i)}$ in the sequence for which $H(p^{(i)})\leq R$ and $H(p^{(i-1)})>R$. Let $p^{(i-1)}_{j}$ and  $p^{(i-1)}_{j+1}$ be the two consecutive probabilities that have been merged to obtain $p^{(i)}$. Then, it holds that
    \begin{align}\label{eq:first_step}
     \!\!\!   OPT&(p,R)-H(p^{(i)}) \leq R -H(p^{(i)})\nonumber\\
        &\qquad\qquad \qquad \qquad \quad \ \mbox{(since $OPT(p,R)\leq R$)}\nonumber\\
        <& H(p^{(i-1)})-H(p^{(i)})
        \quad\mbox{(since $R < H(p^{(i-1)})$)}\nonumber\\
        =& \Delta_H(p^{(i-1)}_{j}, p^{(i-1)}_{j+1})\quad\mbox{(by Line 5 of Algorithm \ref{alg:min_min})}.
    \end{align}
    Let $a=p^{(i-1)}_{j}$, $b=p^{(i-1)}_{j+1}$ and $c=a+b$. From (\ref{eq:first_step}), we have
    \begin{align}\label{eq:second_step}
        OPT&(p,R)-H(p^{(i)})<\Delta_H(p^{(i-1)}_{j}, p^{(i-1)}_{j+1})\nonumber\\
        =&a \log_2 \frac{1}{a} + b\log_2 \frac{1}{b} -c\log_2 \frac{1}{c}\nonumber\\
        =& c\left(\frac{a}{c}\log_2 \frac{c}{a}+\frac{b}{c}\log_2 \frac{c}{b}\right)
        = c H\left(\frac{a}{c}, \frac{b}{c}\right)
        \leq c H\left(\frac{1}{2}, \frac{1}{2}\right)\nonumber\\
        =&c= p^{(i-1)}_{j} + p^{(i-1)}_{j+1}.
    \end{align}
    We observe that Algorithm \ref{alg:min_min}, like Algorithm \ref{alg:version_1}, performs at most $n-\lfloor2^R\rfloor$ steps. In fact, the contiguous aggregation $p^{(n-\lfloor2^R\rfloor)}$ has size $\lfloor 2^R\rfloor$ and its entropy is at most $\log_2 \lfloor 2^R\rfloor\leq \log_2 2^R=R$. In addition, since the minimum sum of a pair of consecutive probabilities can only increase in $i$ throughout this sequence of aggregations, from (\ref{eq:second_step}) it follows that
    \begin{equation}\label{eq:third_step}
        OPT(p,R)-H(p^{(i)}) < p^{(n-\lfloor2^R\rfloor-1)}_{k} + p^{(n-\lfloor2^R\rfloor-1)}_{k+1},
    \end{equation}
    where $p^{(n-\lfloor2^R\rfloor-1)}_{k}$ and $p^{(n-\lfloor2^R\rfloor-1)}_{k+1}$ are the pair of consecutive probabilities with the smallest sum in $p^{(n-\lfloor2^R\rfloor-1)}$. 
   We now upper bound (\ref{eq:third_step}). Let $m=n-\lfloor2^R\rfloor-1$, and  
   $p'$ denote the list obtained from $p^{(m)}$ by removing the minimal pair $p^{(m)}_{k}$ and $p^{(m)}_{k+1}$. Further,  let $p''$ be the list obtained by merging all pairs of consecutive elements of $p'$, i.e., $p''=(p'_1+p'_2,p'_3+p'_4,\dots)$. If the size of $p'$ is odd, we ignore the last element since it cannot be merged with any other probability.
    We now prove that every element in $p''$ is greater than or equal to the sum $p^{(m)}_{k} + p^{(m)}_{k+1}$. We first recall that all pairs of consecutive elements in $p'$
    --- except (possibly) the pair $p'_{k-1}=p^{(m)}_{k-1}$ and $p'_{k+1}=p^{(m)}_{k+2}$ ---
    appear as pairs of consecutive elements in $p^{(m)}$ as well. Thus, since $p^{(m)}_{k} + p^{(m)}_{k+1}$ is the smallest sum of a pair of consecutive probabilities in $p^{(m)}$ and $p''$ is obtained from $p'$ by merging the pairs of consecutive elements, it follows that the only possible element of $p''$ that could potentially be smaller than $p^{(m)}_{k} + p^{(m)}_{k+1}$ is the one obtained by merging the pair $p'_{k-1}=p^{(m)}_{k-1}$ and $p'_{k+1}=p^{(m)}_{k+2}$. However, by the definition of $p^{(m)}_{k}$ and $p^{(m)}_{k+1}$ as the pair with the minimal sum, we have that
    \begin{equation}\label{prima}
        p^{(m)}_{k-1} + p^{(m)}_{k}\geq p^{(m)}_{k} + p^{(m)}_{k+1},
    \end{equation}
    and
    \begin{equation}\label{seconda}
        p^{(m)}_{k+1} + p^{(m)}_{k+2} \geq p^{(m)}_{k} + p^{(m)}_{k+1}.
    \end{equation}
    From (\ref{prima}) and (\ref{seconda}) one gets
    \begin{equation}\label{eq:_k-1}
        p^{(m)}_{k-1} \geq  p^{(m)}_{k+1} \quad \mbox{ and } \quad p^{(m)}_{k+2} \geq p^{(m)}_{k}.
    \end{equation}
    Thus, from (\ref{eq:_k-1}) 
    it follows that
    \begin{equation}
        p^{(m)}_{k-1} +p^{(m)}_{k+2} \geq p^{(m)}_{k} + p^{(m)}_{k+1}.
    \end{equation}
    Hence, every element of $p''$ is at least as large  as $p^{(m)}_{k} + p^{(m)}_{k+1}$. 
    
    Let $p'''$ be the list obtained by appending the element $p^{(m)}_{k} + p^{(m)}_{k+1}$ to $p''$. The sum of the elements in $p'''$ is at most 1, and its size is at least $\lfloor(n-m-2)/2\rfloor + 1 =\lfloor(\lfloor2^R\rfloor-1)/2\rfloor+1$. Furthermore, since the sum $p^{(m)}_{k} + p^{(m)}_{k+1}$ is the smallest element in $p'''$ and $m=n-\lfloor2^R\rfloor-1$, one gets that
    \begin{align}\label{eq:up_sum}
        p^{(m)}_{k} + p^{(m)}_{k+1} &= p^{(n-\lfloor2^R\rfloor-1)}_{k} + p^{(n-\lfloor2^R\rfloor-1)}_{k+1}\nonumber\\&\leq \frac{1}{\lfloor\frac{\lfloor2^R\rfloor-1}{2}\rfloor+1}.
    \end{align}
    Finally, combining (\ref{eq:third_step}) and (\ref{eq:up_sum}) yields
    \begin{align*}
         OPT(p,R)-H(p^{(i)}) &< p^{(n-\lfloor2^R\rfloor-1)}_{k} + p^{(n-\lfloor2^R\rfloor-1)}_{k+1}\\
         &\leq \frac{1}{\lfloor\frac{\lfloor2^R\rfloor-1}{2}\rfloor+1}.
    \end{align*}
\end{proof}

\vspace*{-.3cm}
Similarly to Algorithm \ref{alg:version_1}, we can establish a multiplicative guarantee for Algorithm \ref{alg:min_min}, as shown in the following theorem.

\begin{theorem}\label{th:min-min_mul}
    Given a probability distribution $p=(p_1,\dots,p_n)\in \cP_n$.
    Algorithm \ref{alg:min_min} outputs a contiguous aggregation $q$ of $p$ that satisfies
    \begin{equation}
         H(q) > \left(1-\frac{\frac{1}{\lfloor\frac{\lfloor2^R\rfloor-1}{2}\rfloor+1}}{R}\right)OPT(p,R) .
    \end{equation}
\end{theorem}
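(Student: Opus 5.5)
The plan is to follow the template of the proof of Theorem \ref{th:avoid_max_mul}, and to substitute the bound on the merged pair that was established in the proof of Theorem \ref{th:min-min_add}.

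Let $p^{(i)}$ be the output of Algorithm \ref{alg:min_min}. It is the first element of the sequence with $H(p^{(i)})\leq R$, so $H(p^{(i-1)})>R$. Let $p^{(i-1)}_j$ and $p^{(i-1)}_{j+1}$ be the pair that is merged at that step. Using $OPT(p,R)\leq R<H(p^{(i-1)})$, exactly as in (\ref{eq:mul_bound_step_1}), we obtain
\[
\frac{H(p^{(i)})}{OPT(p,R)} \geq \frac{H(p^{(i)})}{R} > \frac{H(p^{(i)})}{H(p^{(i-1)})} = 1-\frac{\Delta_H(p^{(i-1)}_j,p^{(i-1)}_{j+1})}{H(p^{(i-1)})}.
\]

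Next we bound the numerator. By (\ref{eq:second_step}), $\Delta_H(p^{(i-1)}_j,p^{(i-1)}_{j+1})\leq p^{(i-1)}_j+p^{(i-1)}_{j+1}$. Consider the minimal consecutive-pair sum at each step. The algorithm runs for at most $n-\lfloor 2^R\rfloor$ steps, so $i-1\leq n-\lfloor2^R\rfloor-1$, and this minimal sum is nondecreasing along the sequence. Hence the merged sum is at most the minimal pair sum in $p^{(n-\lfloor2^R\rfloor-1)}$. By (\ref{eq:up_sum}), that quantity is at most $1/(\lfloor(\lfloor2^R\rfloor-1)/2\rfloor+1)$.

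For the denominator we use $H(p^{(i-1)})>R$. This gives
\[
\frac{H(p^{(i)})}{OPT(p,R)} > 1-\frac{1/(\lfloor\frac{\lfloor2^R\rfloor-1}{2}\rfloor+1)}{R}.
\]
Multiplying through by $OPT(p,R)$ yields the claim. This step needs $OPT(p,R)>0$. If $OPT(p,R)=0$, the strict inequality is degenerate, and that edge case is handled in the same way as in Theorem \ref{th:avoid_max_mul}.

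The main obstacle is the chain of strict inequalities together with the monotonicity of the minimal consecutive-pair sum. Both were already asserted in the proof of Theorem \ref{th:min-min_add}, so I would simply invoke them from there. What remains is to check that the indices line up, namely $i-1\leq n-\lfloor 2^R\rfloor-1$.
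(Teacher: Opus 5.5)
Your proposal is correct and is essentially the paper's proof. The paper uses the same chain: $OPT(p,R)\leq R<H(p^{(i-1)})$, the bound $\Delta_H\leq p^{(i-1)}_j+p^{(i-1)}_{j+1}$ from (\ref{eq:second_step}), monotonicity of the minimal pair sum, and (\ref{eq:up_sum}); it merely replaces $H(p^{(i-1)})$ by $R$ before bounding $\Delta_H$ rather than after, which is immaterial.

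Your edge-case remark needs correcting, though. The proof of Theorem~\ref{th:avoid_max_mul} does not handle $OPT(p,R)=0$; the paper silently divides by $OPT(p,R)$ in both proofs. In that case the strict inequality actually fails: for example, $p=(1/2,1/2)$ and $R=1/2$ force $q=(1)$ and give $0>0$. So you should state $OPT(p,R)>0$ as an explicit assumption rather than defer to the earlier theorem.
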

\begin{proof}
   Let us recall that  Algorithm \ref{alg:min_min} produces a sequence of contiguous aggregations $p^{(0)},\dots, p^{(n-1)}$ and it outputs the first aggregation $p^{(i)}$ in the sequence for which it holds that $H(p^{(i)})\leq R$ and $H(p^{(i-1)})>R$. Let $p^{(i-1)}_{j}$ and  $p^{(i-1)}_{j+1}$ denote the two consecutive probabilities that have been merged to obtain $p^{(i)}$. For this aggregation, it follows that
    \begin{align}\label{eq:min-min_mul_step_1}
        \frac{H(p^{(i)})}{OPT(p,R)}&>\frac{H(p^{(i)})}{H(p^{(i-1)})}\quad\mbox{\big(since $H(p^{(i-1)})>R$\big)}\nonumber\\
        &=\frac{H(p^{(i-1)}) -(H(p^{(i-1)}) - H(p^{(i)}))}{H(p^{(i-1)})}\nonumber\\
        & = 1- \frac{H(p^{(i-1)}) - H(p^{(i)})}{H(p^{(i-1)})}\nonumber\\
         &= 1 - \frac{\Delta_H(p^{(i-1)}_{j}, p^{(i-1)}_{j+1})}{H(p^{(i-1)})}\nonumber\\
        &> 1 -\frac{\Delta_H(p^{(i-1)}_{j}, p^{(i-1)}_{j+1})}{R}\\
       &\qquad\qquad \quad\mbox{\big(since $R<H(p^{(i-1)})$\big)}\nonumber
    \end{align}
    We recall from (\ref{eq:second_step}) that it holds
    \begin{equation}\label{eq:min-min_mul_step_2}
        \Delta_H\left(p^{(i-1)}_{j}, p^{(i-1)}_{j+1}\right)\leq p^{(i-1)}_{j} + p^{(i-1)}_{j+1}.
    \end{equation}
    Furthermore, since Algorithm \ref{alg:min_min} performs at most $n-\lfloor2^R\rfloor$ steps, and using (\ref{eq:third_step}) and (\ref{eq:up_sum}), we get that
    \begin{align}\label{eq:min-min_mul_step_3} 
        p^{(i-1)}_{j} + p^{(i-1)}_{j+1}&\leq p^{(n-\lfloor2^R\rfloor-1)}_{k} + p^{(n-\lfloor2^R\rfloor-1)}_{k+1}\nonumber\\
        &\leq \frac{1}{\lfloor\frac{\lfloor2^R\rfloor-1}{2}\rfloor+1},
    \end{align}
    where $ p^{(n-\lfloor2^R\rfloor-1)}_{k}$ and  $p^{(n-\lfloor2^R\rfloor-1)}_{k+1}$ are the pair of consecutive probabilities with the smallest sum in $ p^{(n-\lfloor2^R\rfloor-1)}$.
   Combining (\ref{eq:min-min_mul_step_1}), (\ref{eq:min-min_mul_step_2}), and (\ref{eq:min-min_mul_step_3}), we obtain
    \begin{align}\label{eq:min-min_mul_final_step}
        \frac{H(p^{(i)})}{OPT(p,R)} &> 1 -\frac{\Delta_H\left(p^{(i-1)}_{j}, p^{(i-1)}_{j+1}\right)}{R}\nonumber\\
        & \geq  1 -\frac{p^{(i-1)}_{j} + p^{(i-1)}_{j+1}}{R}\nonumber\\
        & \geq1 -\frac{ p^{(n-\lfloor2^R\rfloor-1)}_{k} + p^{(n-\lfloor2^R\rfloor-1)}_{k+1}}{R}\nonumber\\
        & \geq1 -\frac{\frac{1}{\lfloor\frac{\lfloor2^R\rfloor-1}{2}\rfloor+1}}{R}.
    \end{align}
    Finally, rearranging (\ref{eq:min-min_mul_final_step}) one gets (\ref{eq:conc2}) 
    that concludes the proof.
    \begin{equation}\label{eq:conc2}
        H(p^{(i)})>\left(1 -\frac{\frac{1}{\lfloor\frac{\lfloor2^R\rfloor-1}{2}\rfloor+1}}{R}\right)OPT(p,R).
    \end{equation}
\end{proof}

\vspace*{-.3cm}
\begin{remark}
     Let us demonstrate that
     Algorithm \ref{alg:version_1} and Algorithm \ref{alg:min_min} are not directly comparable, in the sense that there are cases where Algorithm \ref{alg:version_1} 
     performs better than Algorithm \ref{alg:min_min}, and vice-versa. 
     To illustrate this, consider the  probability distribution 
         $p = (0.25, 0.22, 0.12, 0.20, 0.21).$
     For $R=0.9$, Algorithm \ref{alg:version_1} provides the  contiguous aggregation 
       $ q=(0.79, 0.21),$
     whose entropy $H(q)$ is approximately $0.7415$. 
     On the other hand, Algorithm \ref{alg:min_min} yields the trivial aggregation $q=(1)$,
     with entropy equal to $0$.  Thus, in this case, Algorithm \ref{alg:version_1} outperforms Algorithm \ref{alg:min_min}.
     Consider, now, the uniform probability distribution for $n=10$, that is, 
        $p = (0.1, 0.1, \dots, 0.1, 0.1).$
     For $R=3$, Algorithm \ref{alg:version_1} provides the following aggregation
         $q = (0.1, 0.1, 0.1, 0.1, 0.1, 0.1, 0.1, 0.3)$,
     whose entropy $H(q)$ is approximately $2.846$. In contrast, Algorithm \ref{alg:min_min} outputs 
         $q = (0.2, 0.2, 0.1, 0.1, 0.1, 0.1, 0.1, 0.1),$
     resulting in a higher entropy of $H(q)\approx 2.921$. Thus, in this case, Algorithm \ref{alg:min_min} performs better than Algorithm \ref{alg:version_1}.
\end{remark}

\section{Conclusions and Open Problems}
In this paper, we have derived an exact but inefficient algorithm for the 
constrained entropy maximization problem (\ref{eq:opt_prob}), illustrated
a few scenarios where the problem is  relevant, and designed
efficient approximation algorithms with a good approximation guarantee.
We conjecture
that the maximization problem (\ref{eq:opt_prob}) is NP-hard.
Another interesting extension of our results would be to devise a polynomial-time approximation scheme 
by suitably rounding the input probabilities in the dynamic programming algorithm of Section \ref{sec:Dyn}.

\end{document}